\begin{document}

\setlength{\leftmargini}{0\leftmargini}
\newtheorem{definition}{Definition}
\newtheorem{lemma}{Lemma}
\newtheorem{proposition}{Proposition}
\newtheorem{corollary}{Corollary}
\newtheorem{theorem}{Theorem}
\newtheorem{conjecture}{Conjecture}
\newtheorem{remark}{Remark}
\newtheorem{example}{Example}

\newcommand{\dsum}{\displaystyle\sum}
\newcommand{\dfr}{\displaystyle\frac}
\newcommand{\dint}{\displaystyle\int}
\newcommand{\dprod}{\displaystyle\prod}
\newcommand{\naturals}{\ensuremath{\mathbb{N}}}
\newcommand{\reals}{\ensuremath{\mathbb{R}}}
\newcommand{\expectation}{\ensuremath{\mathbb{E}}}
\setcounter{page}{1}

\title{Error Bounds for Repeat-Accumulate Codes Decoded via Linear Programming}


\author{Idan Goldenberg \hspace*{1cm} David Burshtein \\
School of Electrical Engineering \\
Tel-Aviv University \\
Tel-Aviv 69978, Israel \\
E-mail: {\tt \{idang,burstyn\}@eng.tau.ac.il} }

\date{}
\maketitle

\baselineskip=18pt
\begin{abstract}
\baselineskip=16pt
We examine regular and irregular repeat-accumulate (RA) codes with repetition degrees which are all even.
For these codes and with a particular choice of an interleaver, we give an upper bound on the decoding
error probability of a linear-programming based decoder which is an inverse polynomial in the block length.
Our bound is valid for any memoryless, binary-input, output-symmetric (MBIOS) channel.
This result generalizes the bound derived by Feldman et al., which was for regular RA($2$) codes.
\end{abstract}

{\em Keywords:} Coding theory, repeat-accumulate codes,
linear-programming (LP) decoding, upper bound, error performance.

\section{Introduction}
Since the discovery \cite{TurboCodes1993} of Turbo codes in 1993,
there has been much focus on understanding why they perform superbly as they do.
The discovery of Turbo codes also sparked
an abundance of research into LDPC codes which were originally discovered
by Gallager \cite{Gallager1963}. This vast study of Turbo and
LDPC codes, as well as their many variations, has mainly
been with respect to two types of decoders: Optimal
maximum-likelihood (ML) and sub-optimal iterative
message-passing algorithms. The latter have been extensively
researched with several variations of the decoding algorithm,
producing in some cases an accurate understanding of the decoder
performance.

Recently, a novel decoding scheme based on linear programming (LP)
was proposed. Initially, an LP-based decoder was proposed for Turbo
codes by Feldman et al \cite{FeldmanKarger2002} with an explicit
performance bound given for repeat-accumulate (RA) codes, a variant
of Turbo codes. Later, another LP-based decoder was proposed for LDPC
codes by the same authors \cite{FeldmanWK2005}. These results, among others, have been
well-summarized in \cite{FeldmanPhdThesis}.
Further results for the LP decoder of LDPC
codes include the characterization of
pseudocodewords, and in particular, minimum-weight pseudocodewords
(e.g., \cite{VTKoetter2004,SmarVT2007}); results for the binary
symmetric channel (BSC) on the error-correction capability
(e.g., \cite{Feldman2007,DaskalakisDimakis2008}), and others.

One interesting property of the LP decoder is the \emph{ML
certificate} property. That is, that whenever the LP decoder outputs
a codeword, it is guaranteed to be the ML codeword. Iterative
message-passing algorithms do not share this property. On the other
hand, iterative algorithms have in some cases the advantage of lower
decoding complexity, as compared to LP decoding. However, for LDPC
codes this advantage is all but eliminated (see
\cite{VontobelKoetter2006,Burshtein2008}).

Compared to iterative decoding, there has so far been less
research on LP-based decoding. While the first analytic result for
LP decoding \cite{FeldmanKarger2002} has been for the case of RA
codes, most of the results for LP decoding thereafter refer to LDPC
codes. In \cite{FeldmanKarger2002}, regular RA($2$) codes were
examined, based on flow theory and graph-theoretical arguments.
Halabi and Even \cite{HalabiEven2005} have proposed a better bound
on RA($2$) codes which is based on a more careful examination of the
underlying graph-theoretical nature of the problem. Irregular RA code ensembles
have been shown to achieve excellent performance under iterative message-passing
decoding. For example, in the BEC there are known capacity-achieving sequences of codes (see e.g., \cite{PSU2005}).
Motivated
by these results for iterative decoding, we examine regular and irregular
RA codes under LP decoding. We show how to extend the results of \cite{FeldmanKarger2002}
to regular RA($q$) codes for even $q$ and to irregular RA
ensembles where all repetition degrees are even. The essential novelty in this work
is the application of Euler's (graph-theoretic) theorem to an appropriately defined
(hyperpromenade) graph.

The remainder of the paper is organized as follows. Preliminary
material is given in Section~\ref{Preliminaries Section}.
Section~\ref{Regular RA bound section} contains the derivation of
our error bound for regular RA codes. A discussion of these results
as well as their extension to irregular RA codes appears in
Section~\ref{Discussion and Numerical Results section}.
Section~\ref{Summary section} concludes the paper.
\section{Preliminaries}\label{Preliminaries Section}
In this section, we give our nomenclature and some necessary
preliminary material. Our notations largely follow those of Feldman
\cite{FeldmanPhdThesis}. In the rest of the paper, we will deal
exclusively with repeat-accumulate (RA) codes. These codes were
proposed by Divsalar et al. in \cite{Divsalar98}, in which regular
code ensembles were defined, and later generalized to irregular
ensembles in \cite{Jin-brest00}. Repeat-accumulate codes feature a
simple encoder structure and are known to have good decoding
performance under iterative message-passing decoding. The encoder of
a \emph{regular} $RA(q)$ code, shown in Fig.~\ref{RA encoder
figure}, takes an input block of $k$ bits, applies a $q$-fold
repetition code to obtain a block of $n=qk$ bits, interleaves the
block and finally feeds it into a rate-1 accumulator. The
accumulator is a recursive convolutional encoder with one memory
element which outputs at time index $t$ simply the mod-$2$ sum of
the inputs up to time $t$. The code rate in this case is
$R=\frac{1}{q}$. In an \emph{irregular} code, the number of times a bit is repeated (or its
\emph{repetition degree}) is not constant. The fraction of bits which are repeated a certain
number of times by the encoder is known as the \emph{degree
distribution}, and it is usually expressed either in vector form or in
polynomial form.

\begin{figure}[h]
\begin{center}
\leavevmode
\input{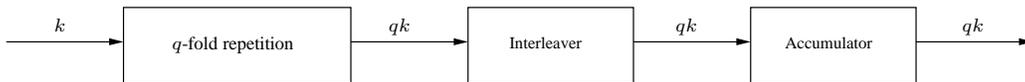}
\caption{Block diagram of a regular RA encoder.}
\label{RA encoder figure}
\end{center}
\end{figure}

Our analysis will focus on transmission of regular or irregular RA codes over memoryless, binary-input,
output-symmetric (MBIOS) channels. We denote by $x_i \in \{0,1\}, i=1,\dots,k$ the $i$'th bit of the codeword to be
transmitted and by $y_i = y(x_i)$ the channel modulation of the $i$'th bit. Since the channel is memoryless,
the $i$'th received symbol $\tilde{y}_i$ depends only on $y_i$ by the conditional
probability law $P(\tilde{y}_i|y_i)$ imposed by the channel.

The \emph{log-likelihood ratio} (LLR) $\gamma_i$ is defined as
\begin{equation}\label{Definition of LLR}
    \gamma_i = \ln \left( \frac{P(\tilde{y}_i| y_i=y(0))}{P(\tilde{y}_i| y_i=y(1))} \right)
\end{equation}
\begin{example}
In the binary symmetric channel (BSC) the channel input alphabet is binary, and so we have $y_i=x_i$.
The log-likelihood ratio is $\gamma_i = \ln \left(\frac{1-p}{p} \right)$ if $\tilde{y_i}=0$ and
$\gamma_i= \ln \left(\frac{p}{1-p} \right)$ if $\tilde{y_i}=1$.
\end{example}
\begin{example}
Consider the binary-input additive white gaussian noise (AWGN) channel.
Following conventional notation, we map bit $0$ to $+1$ and bit $1$ to $-1$, i.e., we have $y_i=1-2x_i$.
In the AWGN channel we have
\begin{equation*}
    \tilde{y}_i=y_i+z_i
\end{equation*}
where $z_i$ is a normally-distributed random variable, $z_i \sim \mathcal{N}(0,\sigma^2)$.
The LLR in the AWGN channel is easily shown to be $\gamma_i = \frac{2\tilde{y}_i}{\sigma^2}$.
\end{example}

It is convenient for purposes of analysis to rescale the LLR.
In the BSC, the rescaling enables to have $\gamma_i = 1$ if $\tilde{y_i}=0$ and $\gamma_i = -1$ if $\tilde{y_i}=1$.
In the AWGN channel, rescaling allows us to express $\gamma_i = \tilde{y}_i$.

\subsection{A Linear Program to Decode Repeat-Accumulate Codes}
We are interested in the performance of a linear-programming (LP) decoder for RA codes.
To make our presentation self-contained, we briefly present the linear program proposed
by Feldman \cite{FeldmanPhdThesis}.

First, we look at the accumulator section of the encoder, assuming at this stage that it were
the entire encoder. The accumulator is a rate-1 convolutional encoder, and has a state diagram
and trellis as shown in Fig. \ref{accumulator: trellis and state diagram}. The trellis $T$ features
connections or \emph{edges} describing transitions between states from successive time intervals, which are labeled according to the output
of the accumulator. Each edge also has a 'type' which depends upon the input bit triggering the transition.
Note that the trellis contains an extra layer used to terminate the
code. Adding the extra bit to force the encoder back to the
zero-state incurs a small loss in the code rate, but makes
analysis more convenient, since each codeword corresponds to a
"cycle" rather than an arbitrary path in $T$.
\begin{figure}[h]
  \begin{center}
    \subfigure[State diagram of an accumulator]{
    \label{RA encoder state diagram}
    \includegraphics[scale=0.65]{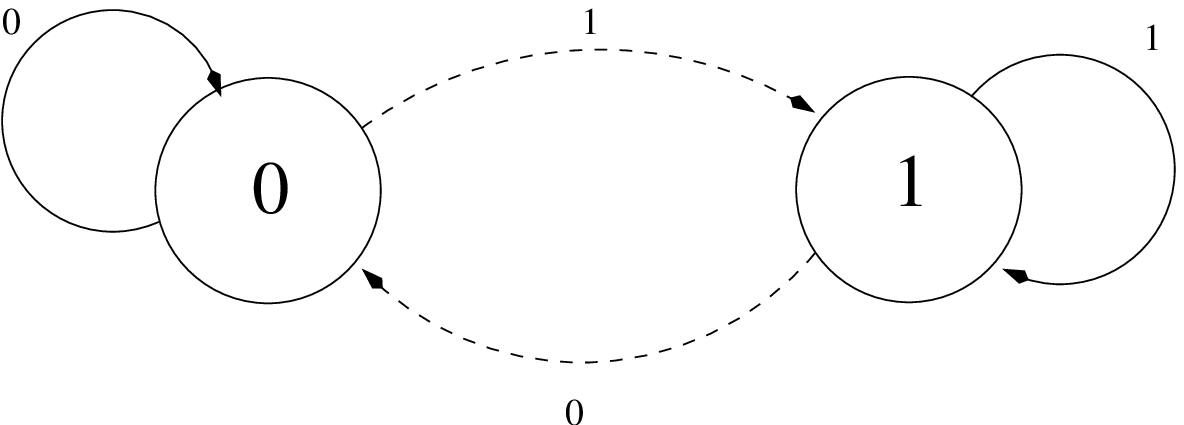}
    } \\
    \subfigure[The trellis for an accumulator]{
    \label{Accumulator trellis}
    \input{trellis.pstex_t}
    }
  \end{center}
  \caption{State diagram and trellis of an accumulator. Each state transition has
a label according to the output of the accumulator. Solid lines
denote state transitions when the input bit is zero, and dashed
lines are used when the input bit is one.}
  \label{accumulator: trellis and state diagram}
\end{figure}
All edges have a direction (i.e., forward in time).
Also, every edge $e \in T$ is assigned a cost $\gamma_e=\gamma_i$ where the index $i$ is selected
according to the trellis segment containing the edge.
Consequently, it can be shown that finding the ML codeword is equivalent to
finding the minimum-cost path traversing across the trellis. If
indeed the accumulator were the entire code, there would be no
additional constraints on the path, and finding the one with mininal
cost could be accomplished, for example, by using the Viterbi
algorithm.

We now take the effect of the (possibly irregular) repetition code and interleaver into account.
Assume that input bit $x_t$ has repetition degree $q_t$. If so, we would expect the inputs
into the accumulator at some set of indices $X_t \triangleq \{i^1,i^2,\dots,i^{q_t}\}$ to be
identical (obviously, this set depends on the interleaver).
Translating this into trellis terms, we require that for all $t=1,\dots,k$ we have the same type of edge
at all layers $i \in X_t$. Any path satisfying this requirement is called an \emph{agreeable} path.

A linear program to decode RA codes (RALP) was defined by Feldman \cite{FeldmanPhdThesis} as follows:
\begin{eqnarray}
  \text{RALP:  minimize   } & \dsum_{e \in T} \gamma_e f_e & \text{s.t.} \nonumber \\
  & \dsum_{e \in out(s_0^0)} f_e =1 & \label{force unit flow}\\
  & \dsum_{e \in out(s)} f_e = \dsum_{e \in in(s)} f_e & \forall s \in T \backslash \{s_0^0,s_n^0\} \label{flow conservation}\\
  & x_t = \dsum_{e \in I_i} f_e & \forall i \in X_t, \; t=1,\dots,k \label{agreeability constraints} \\
  & 0 \leq f_e \leq 1 & \forall e \in T \nonumber
\end{eqnarray}
where $in(s)$ is the set of edges entering node $s$, $out(s)$ is the set of edges
exiting node $s$, and $I_i= \{ (s_{i-1}^0,s_i^1),(s_{i-1}^1,s_i^0) \}$ is the pair of
"input-1" edges entering layer $i$.
Equation \eqref{force unit flow} ensures that one unit of flow is
sent across the trellis. Equation \eqref{flow conservation} enforces
flow conservation at each node, i.e., that whatever flow enters must also exit.
The agreeability constraints are
imposed by equation \eqref{agreeability constraints}. These
constraints say that a feasible flow must have, for all $X_t, \;
t=1,\dots,k$, the same amount $x_t$ of total flow on input-$1$ edges
at every segment $i \in X_t$.

In order to use RALP as a decoder, one should solve the LP problem above on the trellis with
edge costs $\gamma_e$ defined by the received vector $\tilde{y}$, thus obtaining an optimum
point $(f^*,x^*)$. If $f^*$ is integral (i.e., all values
are 0 or 1), $x^*$ is output as the decoded information word. If not,
the output is "error". We refer to this algorithm as the \emph{RALP
decoder}. It can be shown that this decoder has the \emph{ML certificate}
property: whenever it finds a codeword, it is guaranteed to be the
ML codeword.

\section{An Error Bound for Regular RA($q$) Codes with Even $q$}\label{Regular RA bound section}
In this section, we derive an upper bound on the decoding error
probability of the RALP decoder. For simplicity, we deal in this section exclusively with regular codes.
This is an extension of the results
of \cite{FeldmanPhdThesis}, which applied to RA(2) codes, to the
case of RA($q$) codes for even $q$.
For the purpose of analysis, we
define an auxiliary graph which contains subgraphs called
\emph{hyperpromenades} which carry a meaning similar to error events
in convolutional codes. The structure of these hyperpromenades
suggests a design of an interleaver. We show how to design a suitable
interleaver, and show that the RALP decoder has an
inverse-polynomial error rate (in the blocklength $n$) when this interleaver is used. Our
discussion will not depend initially on the repetition degrees being even; we will only
require this assumption later on.

Let $\Theta$ be a weighted undirected graph with $n$ vertices
$(g_1,\dots,g_n)$ connected in a line. We call these edges \emph{Hamiltonian},
as they form a Hamiltonian path along the graph. We associate a cost (weight)
$c[g_i,g_{i+1}]$ with each Hamiltonian edge $(g_i,g_{i+1})$, equal to the cost added
by decoding code bit $i$ to the opposite value of the transmitted
codeword. Formally, we have
\begin{equation}\label{Definition of edge cost}
c[g_i,g_{i+1}] = \gamma_i\left( 1-2x_i \right)
\end{equation}
where $x_i$ is the $i$'th codeword bit, and $\gamma_i$ is the
log-likelihood ratio of code bit $i$, as defined in
\eqref{Definition of LLR}. In the BSC, we have $c[g_i,g_{i+1}]=+1$
if $x_i=y_i=\tilde{y}_i$ and $c[g_i,g_{i+1}]=-1$ if $y_i \neq
\tilde{y}_i$. Naturally, the decoder does not know the costs
$c[g_i,g_{i+1}]$; they are used solely as a means for analysis. In
addition to the Hamiltonian edges described above, $\Theta$ contains
also \emph{hyperedges} connected between the vertices. A
$q$-hyperedge is an edge connecting $q$ vertices, and is formally
defined as an unordered $q$-tuple of vertices from the graph. We
connect a total of $k$ hyperedges, where hyperedge $t$
contains the vertices within the index set $X_t \; (t=1,\dots,k)$.
Note that according to this setting, exactly one hyperedge is
connected to every vertex. 
In \cite{FeldmanPhdThesis}, where the authors consider the case
$q=2$, these extra edges form a matching on the vertices of the
graph. Extending this nomenclature to any $q$, we will call them
\emph{matching} hyperedges. These edges are defined to have zero
cost in the auxiliary graph.

An \emph{atom path} $\mu(\sigma,\tau)$ is a walk which begins at
vertex $g_{\sigma}$ and finishes at vertex $g_{\tau}$, using
Hamiltonian edges only. Therefore, if $\sigma<\tau$, we have
$\mu(\sigma,\tau)=(g_{\sigma},g_{\sigma+1},\dots,g_{\tau-1},g_{\tau})$.
A \emph{hyperpromenade} $\Psi$ is a set of atom paths, possibly with
multiple copies of the same atom path in the set. The set $\Psi$ is also required
to satisfy a certain "agreeability" constraint.
Formally, define, for each segment $i$ in the trellis where $1\leq i
\leq n$, the following multiset $B_i$:
\begin{equation*}\label{Definition of B_i}
    B_i = \{\mu \in \Psi: \; \mu=\mu(\sigma,\tau),\; \text{where} \; i=\sigma \; \text{or} \; i=\tau \}
\end{equation*}
Note that if multiple copies of some $\mu(\sigma,\tau)$ exist in $\Psi$, then $B_i$ contains multiple copies as well.
We say that $\Psi$ is a hyperpromenade if, for all $t=1\dots,k$, where $X_t=\{t^1,t^2,\dots,t^q\}$, we have
\begin{equation}\label{The sets B}
    | B_{t^1} | = | B_{t^2} | = \dots =| B_{t^q} |
\end{equation}
\begin{example}\label{hyperpromenade example}
As an example, consider the auxiliary graph illustrated in
Figure~\ref{Hyperpromenade example}. In this auxiliary graph of an
RA($4$) code, the multiset
\begin{equation*}
\Psi=\{\mu(1,2),\mu(1,2),\mu(3,10),\mu(4,5),\mu(4,12),\mu(5,7),\mu(6,11),\mu(7,12),\mu(8,9),\mu(8,9)\}
\end{equation*}
is a hyperpromenade.
\end{example}

\begin{figure}[h]
\begin{center}
\input{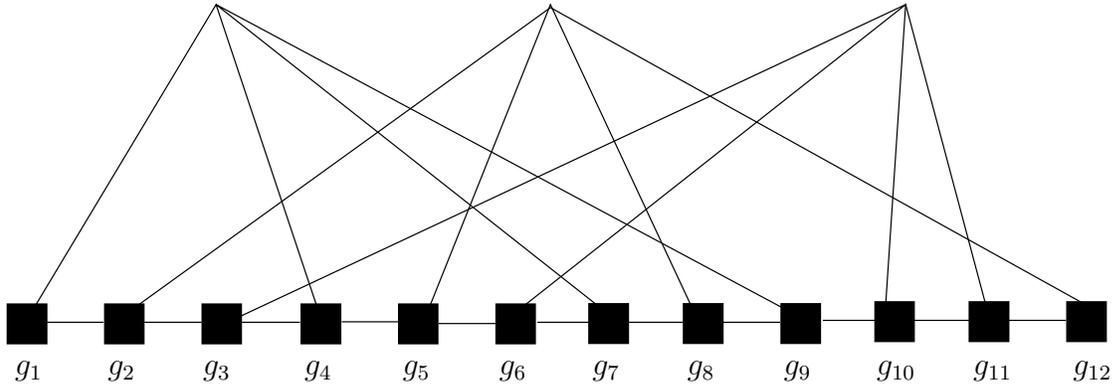}
\caption{The auxiliary graph in example \ref{hyperpromenade example}.}\label{Hyperpromenade example}
\end{center}
\end{figure}

The cost of every atom path $\mu(\sigma,\tau)$ is equal to the sum
of the costs of its edges. The cost of a hyperpromenade is equal to
the sum of the costs of the atom paths it contains, including
repeated ones.

We have the following theorem (\cite[Theorem 6.13]{FeldmanPhdThesis}).
\begin{theorem}\label{Feldmans theorem}
    For any regular RA($q$) code, the RALP decoder succeeds if all hyperpromenades have positive cost.
    The RALP decoder fails if there is a hyperpromenade with negative cost.
\end{theorem}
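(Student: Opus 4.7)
The plan is to set up a bijection between LP cost differences and hyperpromenade costs via flow decomposition on the trellis, so that both directions of the theorem follow from this correspondence. I would let $(f^x, x)$ denote the natural integer feasible point of RALP: $f^x$ is the unit flow along the trellis path encoding $x$, and $x$ is the transmitted information word. The RALP decoder outputs $x$ precisely when $(f^x, x)$ is the unique LP optimum, so the forward direction amounts to showing $\sum_e \gamma_e (f' - f^x)_e > 0$ for any other feasible $(f', x')$, and the reverse direction amounts to exhibiting such a feasible $(f', x')$ with strictly smaller objective.

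Next I would set $d := f' - f^x$. Since both $f'$ and $f^x$ are unit $s_0^0 \to s_n^0$ flows obeying \eqref{flow conservation}, $d$ is a signed circulation supported on the trellis. Applying \eqref{agreeability constraints} to both feasible points yields $\sum_{e \in I_i} d_e = x'_t - x_t =: \delta_t$ for every $i \in X_t$, so the change in input-1 flow is constant across each repetition class. By a standard flow-decomposition argument on the residual graph of the codeword path (codeword edges reversed, all other edges forward), after clearing denominators $d$ decomposes as a non-negative rational combination $d = \sum_j \lambda_j d^{(j)}$, where each $d^{(j)}$ is a unit ``branch-off / rejoin'' deviation that leaves the codeword at time $\sigma_j$, traverses an alternative sub-path, and rejoins at time $\tau_j$. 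Such a deviation flips the accumulator input at exactly segments $\sigma_j$ and $\tau_j$, and flips the output codeword bit at segments $\sigma_j, \sigma_j + 1, \dots, \tau_j - 1$.

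I would then map each deviation $d^{(j)}$ to the atom path $\mu(\sigma_j, \tau_j)$ in $\Theta$. Since flipping codeword bit $i$ changes the LP objective by exactly $\gamma_i(1 - 2x_i) = c[g_i, g_{i+1}]$, the LP cost of $d^{(j)}$ equals the cost of $\mu(\sigma_j, \tau_j)$, and therefore $\sum_e \gamma_e d_e$ equals the cost of the multiset $\Psi := \{\mu(\sigma_j, \tau_j)\}_j$ with multiplicities $\{\lambda_j\}_j$ (after integer rescaling). Moreover $\Psi$ satisfies \eqref{The sets B}: the weighted count $|B_i| = \sum_{j:\, \sigma_j = i \text{ or } \tau_j = i} \lambda_j$ equals $|\delta_t|$ for every $i \in X_t$, by the agreeability computation above. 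Hence $\Psi$ is a hyperpromenade whose cost coincides with the LP cost change. The forward direction is then immediate: positivity of all hyperpromenade costs forces $\sum_e \gamma_e d_e > 0$ for every nonzero $d$, so $(f^x, x)$ is the unique optimum. For the converse, a hyperpromenade with negative cost can be mapped back to a signed circulation, and adding an $\epsilon$-multiple of it to $f^x$ would yield a feasible point with strictly smaller objective, forcing LP failure.

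The main obstacle I anticipate is the flow-decomposition step: justifying that a fractional signed circulation satisfying the uniform agreeability constraint actually decomposes into atomic branch-off/rejoin deviations whose endpoint multiplicities respect the hyperpromenade condition, rather than into more tangled combinations. This should be controllable because the accumulator trellis has only two states per layer, so any circulation on it naturally separates into simple deviations extracted greedily from the residual graph; the uniformity of $\delta_t$ across each repetition class then guarantees that the $|B_i|$ counts coincide within $X_t$ as required.
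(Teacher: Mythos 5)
The paper does not actually supply a proof of Theorem~\ref{Feldmans theorem}: it cites Feldman's thesis for the $q=2$ case and asserts that the same argument carries over to $q>2$, and your proposal is essentially a correct reconstruction of exactly that argument for general $q$ --- residual-graph flow decomposition of $f'-f^x$ into branch-off/rejoin deviations, identification of each deviation with an atom path of equal cost, and the observation that the agreeability constraints force the endpoint multiplicities $|B_i|$ to be constant on each $X_t$. The only steps worth writing out in full are the no-cancellation point (since the codeword's input bit is constant on $X_t$, every deviation endpoint at a segment $i\in X_t$ contributes to $\sum_{e\in I_i} d_e$ with the same sign, so the net change in input-$1$ flow really does equal the weighted endpoint count $|B_i|$) and, in the converse direction, the check that adding an $\epsilon$-multiple of the circulation induced by a negative-cost hyperpromenade preserves all RALP constraints; both are routine.
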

This theorem was stated in \cite{FeldmanPhdThesis}, and was therein proved for the case of
$q=2$. The same proof applies also for $q>2$. However, in order to use this result we need to show how to
construct graphs $\Theta$ which yield good interleavers for RA($q$)
codes; we will show that these graphs have a small probability
of having a negative-cost hyperpromenade. A key metric in our
analysis will be the girth of the auxiliary graph. As the auxiliary
graph contains hyperedges as well as regular edges (thus it is a
\emph{hypergraph}), the notion of girth needs to be extended. Define
a \emph{path} $p=(p_0,\dots,p_k)$ in the auxiliary graph to be a
series of vertices where every two consecutive vertices are
connected by an edge or a hyperedge. The only exception is that the
same (hyper)edge may not be traveled two times in a row; this means
that U-turns are not allowed, and also roundabouts within a
hyperedge (e.g., if $i_1,i_2,i_3 \in X_t$ for some $t$,
then $(g_{i_1}\rightarrow g_{i_2}\rightarrow g_{i_3})$ is not a
valid path). Aside from this restriction, a path may repeat
vertices, edges and hyperedges. Path length is measured in edges, so
the path $p=(p_0,\dots,p_k)$ has length $k$. A \emph{cycle} is a
path that begins and ends in the same vertex. The girth of a
hypergraph is thus the length of its shortest cycle. We further
define a \emph{simple path} (resp. \emph{simple cycle}) to be a path
(cycle) which does not repeat Hamiltonian edges but \emph{may} repeat
hyperedges.


Our first step is to show that an auxiliary graph $\Theta$ with high girth
can be constructed, thus implying the existence of appropriate
interleavers. For the case of $q=2$, Erd\H{o}s and Sachs \cite{ErdosSachs1963}
(see also \cite{Biggs98}) have shown a construction for such
an interleaver. The following result is an extension to $q\geq 3$ using a similar technique.
While our subsequent error bound is valid only for even $q$, this restriction need not be imposed yet.
\begin{theorem}\label{interleaver theorem}
Let $n=qk$ be the block length of a regular RA($q$) code with $q\geq 3$ and $n\geq q^4$. Then one may construct for this code
    an auxiliary graph which is a Hamiltonian line plus $k$ $q$-hyperedges which form a matching, so that the
    auxiliary graph has girth no less than $g= \lfloor \log_q n \rfloor -1$.
\end{theorem}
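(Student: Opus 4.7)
The plan is to adapt the classical Erdős--Sachs construction of high-girth graphs to our hypergraph setting. I would start with an arbitrary initial assignment $\Theta_0$ of $k$ disjoint $q$-hyperedges to the $n=qk$ vertices of the Hamiltonian line, and iteratively improve it by a local swap until the girth reaches $g=\lfloor\log_q n\rfloor-1$.

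Concretely, suppose the current graph $\Theta$ contains a cycle $C$ of length $\ell<g$. Since the Hamiltonian line is acyclic, $C$ traverses at least one hyperedge $H_1$, which it enters at some vertex $u\in H_1\cap C$. The swap replaces $H_1$ and a suitably chosen other hyperedge $H_2$ by
\begin{equation*}
H_1'=(H_1\setminus\{u\})\cup\{v\}, \qquad H_2'=(H_2\setminus\{v\})\cup\{u\},
\end{equation*}
where $v\in H_2$ is the vertex to be swapped in. The resulting family is again a matching of $q$-hyperedges, and the cycle $C$ is destroyed because $u$ is no longer in the hyperedge replacing $H_1$. Provided $v$ is chosen "far" from $H_1\setminus\{u\}$ and $H_2\setminus\{v\}$ lies "far" from $u$, I would argue that no cycle of length at most $\ell$ is introduced, so the number of cycles of length $<g$ strictly decreases and the process terminates with girth at least $g$.

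The existence of a suitable $v$ is ensured by a Moore-type counting argument. The no-U-turn rule implies that, from any vertex, the first hop reaches at most $q+1$ distinct vertices and each subsequent hop at most $q$ new ones, so the closed ball of radius $g$ around any vertex contains at most $1+(q+1)(q^g-1)/(q-1)\le 2q^g$ vertices. The set of forbidden $v$ consists of those lying in the radius-$g$ ball around some vertex of $H_1\setminus\{u\}$, together with those whose hyperedge intersects the radius-$g$ ball around $u$, and this is at most $O(q^{g+1})$ vertices. With $g=\lfloor\log_q n\rfloor-1$ and $n\ge q^4$ we have $q^{g+1}\le n$, so with room to spare an admissible $v$ exists, and its unique containing hyperedge is taken as $H_2$.

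The main technical obstacle is the verification that the swap introduces no cycle of length at most $\ell$. A candidate new cycle $C'$ in $\Theta'$ that avoids all the new adjacencies (namely $v\leftrightarrow(H_1\setminus\{u\})$ inside $H_1'$ and $u\leftrightarrow(H_2\setminus\{v\})$ inside $H_2'$) is already a cycle of $\Theta$ and so has length at least the girth of $\Theta$; any $C'$ that uses at least one new adjacency must, by the no-consecutive-repetition rule and the distance condition on $v$, contain a segment corresponding to a path in $\Theta$ of length exceeding $g$, forcing $|C'|>g$. The delicate case is a cycle that uses both $H_1'$ and $H_2'$ non-consecutively several times, where one has to invoke the distance conditions on both $v$ and $u$ carefully to rule it out.
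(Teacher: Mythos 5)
Your overall strategy---Erd\H{o}s--Sachs-type local modification of the hyperedge matching, with Moore-type ball counting to locate swap partners---is in the right family, but it is a genuinely different variant from the paper's, and as written it has a real gap at its center. The paper never starts from a full matching and repairs it; instead it grows the set $A$ of hyperedges one at a time, maintaining ``girth $\geq g$'' as an invariant. When no $q$ pairwise-distant unmatched vertices exist, it shows the unmatched vertices are confined to a small union of balls, finds $q$ pairwise-distant \emph{matched} vertices $s_1,\dots,s_q$ far from that region, and recombines their $q$ hyperedges together with the stuck unmatched vertices into $q+1$ new hyperedges; every new adjacency joins vertices at distance at least $g-1$, so no short cycle can appear, and by induction there are no pre-existing short cycles to worry about. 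Your scheme must instead prove simultaneously that (i) the targeted short cycle is destroyed, (ii) no new cycle of length $<g$ is created, and (iii) a monovariant strictly decreases so the process terminates. You explicitly leave (ii) open in ``the delicate case'' of a cycle using both $H_1'$ and $H_2'$, and that is precisely the step carrying the content of the theorem; without it there is no proof. The monovariant is also misstated: excluding new cycles of length $\le\ell$ does not exclude new cycles of length in $(\ell,g)$, so ``the number of cycles of length $<g$ strictly decreases'' does not follow. (The standard repair is a lexicographic monovariant---girth first, then the number of girth-attaining cycles---applied with $C$ a shortest cycle.)

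The counting also does not close as stated. You forbid $v$ from the radius-$g$ balls around the $q-1$ vertices of $H_1\setminus\{u\}$ and bound each ball by roughly $2q^g$; that already gives about $2(q-1)q^g\approx 2q^{g+1}$ forbidden vertices, and since $q^{g+1}$ can be as large as $n$, the forbidden set is not shown to be smaller than $n$---``with room to spare'' is not justified. To make this work you need both a smaller radius (distance $g-1$ between the endpoints of a new adjacency already forces any cycle through it alone to have length $\geq g$) and the sharper ball estimate the paper uses (after traversing a matching hyperedge only the two Hamiltonian steps are available at the next level, which is what makes the bound $1+2(q^{g-1}-1)/(q-1)$ rather than $\Theta(q^{g})$); even then the $q=3$ case is tight enough that the constants must be tracked explicitly, as the paper does in its inequality comparing $|\tilde U|$ with $|W|$. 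In short: right toolbox, but the two load-bearing steps---no new short cycles, and the forbidden set being strictly smaller than the candidate set---are asserted rather than proved.
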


Proof: See appendix \ref{Proof of interleaver theorem}.

Denote the interleaver produced by this
approach by $\pi_E$. The next step is to study the auxiliary graph of an RA code which uses $\pi_E$ as an interleaver.
We focus on the underlying nature of hyperpromenades in this graph.

{\bf A study of the structure of hyperpromenades.} First, we point
out that in the case of $q=2$, it was shown by Feldman
\cite{FeldmanPhdThesis} that every hyperpromenade is equivalent to a
cycle in $\Theta$\footnote{in fact, this was the original definition
of a promenade, to which the hyperpromenade reduces for the case
$q=2$.}. This observation simplifies the analysis, as one must deal
solely with simple cycles in the auxiliary graph. In our case where
$q>2$, this is not necessarily true. Therefore, our conclusions must
be based only on the definition of a hyperpromenade.

Our goal will be to provide an upper bound on the probability that
the auxiliary graph contains a negative-cost hyperpromenade. Let
$\Psi$ be any hyperpromenade in $\Theta$. We construct a graph
called the \emph{hyperpromenade graph} $\Theta_{\Psi}$ as follows:
\begin{enumerate}
\item For every atom path $\mu(\sigma,\tau) \in \Psi$, draw in $\Theta_{\Psi}$
two vertices, labeled $\sigma$ and $\tau$, according to the endpoints of the atom path.
Connect the two vertices by an edge. If $\mu(\sigma,\tau)$ appears more than once in $\Psi$,
we will have multiple replicas of this structure, accordingly.
\item Merge all vertices with the same label into one vertex. At this stage, the graph may no longer be simple, i.e., there
may be vertex pairs connected by more than one edge.
\item Add the matching hyperedges to the graph.
\end{enumerate}
By this construction, it is obvious that one can reconstruct any
hyperpromenade given its hyperpromenade graph, i.e., there is a
$1-1$ relation between hyperpromenades and their graphs. We further
assign a cost to every edge $(\sigma,\tau)$ in $\Theta_{\Psi}$ as
follows : $c[(\sigma,\tau)]=c[\mu(\sigma,\tau)]$ where
$c[\mu(\sigma,\tau)]$ is the cost of the atom path in $\Theta$.
Hyperedges in $\Theta_{\Psi}$ are assigned zero cost. With this
definition, the total cost of the edges in $\Theta_{\Psi}$ is the
same as the cost of the hyperpromenade. We note that the
hyperpromenade graph may or may not be connected (in the sense that
there is a path between any two of its vertices). If it is, we call
the hyperpromenade \emph{connected}. This property is different from
the connectedness of the auxiliary graph, since now vertices common
to different atom paths are ignored unless they are at the
endpoints. As an example, we draw the hyperpromenade graph of $\Psi$
from Example \ref{hyperpromenade example} in Figure
\ref{hyperpromenade graph}. In this example, the hyperpromenade is
not connected and has two connected components.

Let $\Psi$ be a hyperpromenade which is not connected, and consider
the corresponding graph, $\Theta_{\Psi}$. It is easy to verify that
each of its connected components is the hyperpromenade graph of a
valid hyperpromenade. Therefore, $\Psi$ can be partitioned into
disjoint connected hyperpromenades $\Psi_1,\Psi_2,\dots,\Psi_M$
satisfying $c[\Psi]=c[\Psi_1]+\dots+c[\Psi_M]$. If $\Psi$ is a
negative-cost hyperpromenade, it thus must have a component with
negative cost.
\begin{figure}[htp]
\begin{center}
\includegraphics[scale=0.85]{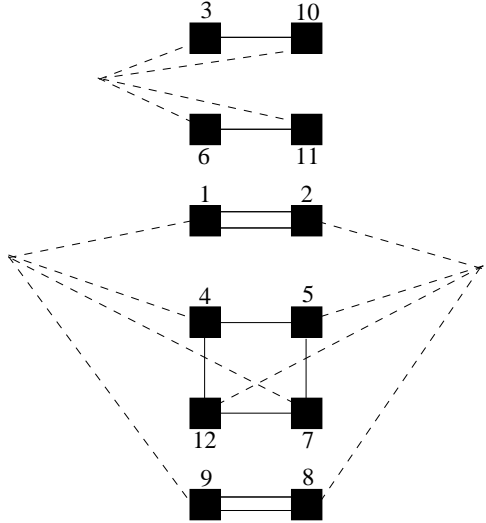}
\caption{The hyperpromenade graph $\Theta_{\Psi}$ of the hyperpromenade from
Example \ref{hyperpromenade example}.}\label{hyperpromenade graph}
\end{center}
\end{figure}
Therefore, by Theorem~\ref{Feldmans theorem}, the probability that the RALP decoder fails is the same
as the probability of having a connected hyperpromenade with
negative cost.

The next step is to establish that it is enough to look at simple
paths and cycles which are contained in a connected negative-cost
hyperpromenade. In the following, we assume $n=q^{4l+1}$ for some
integer $l$ to avoid floors and ceilings, although our arguments do
not rely on this.

\begin{theorem}\label{cycle theorem}
    Let $\Theta$ be the auxiliary graph of a regular RA($q$) code with $q$ even.
    Assume $\Theta$ has girth at least $g$, where $g\triangleq \log_q n -1$ (by
    Theorem \ref{interleaver theorem}, this girth is attainable). If
    there exists a hyperpromenade $\Psi$ in $\Theta$ with $c[\Psi] \leq 0$, then
    there exists a simple path or cycle $Y$ in $\Theta$ that contains
    $\frac{g}{2}$ Hamiltonian edges, and has cost $c[Y] \leq 0$.
\end{theorem}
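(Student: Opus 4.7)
The plan is to reduce the problem to a Veblen-style cycle decomposition on an appropriately augmented multigraph on the vertex set of $\Theta$, realizing the abstract's promised application of Euler's theorem. By the connected-component reduction sketched just before the theorem, I may assume that $\Psi$ is connected, so that a single connected hyperpromenade carries the non-positive cost.

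I would construct a multigraph $M$ on the vertex set of $\Theta$ as follows. Each Hamiltonian edge $e=(g_i,g_{i+1})$ is placed in $M$ with multiplicity $m_e$ equal to the number of atom paths of $\Psi$ that cover $e$; this ensures that the total cost of the Hamiltonian part of $M$ is exactly $c[\Psi]$. For each matching hyperedge $X_t=\{t^1,\dots,t^q\}$, the agreeability condition~\eqref{The sets B} supplies a common value $d_t=|B_{t^1}|=\cdots=|B_{t^q}|$; when $d_t$ is odd I fix an arbitrary perfect matching on the $q$ vertices of $X_t$ (which exists because $q$ is even) and add its $q/2$ pair-edges to $M$ as zero-cost edges, and when $d_t$ is even I add nothing. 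A short parity computation shows that the Hamiltonian-degree of any vertex $g_i$ has the same parity as $|B_i|$, so the added hyperedge-pair edges exactly correct the parity at every vertex and $M$ has all even degrees.

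By Veblen's theorem, the edges of $M$ decompose into edge-disjoint simple cycles $Y_1,\dots,Y_s$ of $M$. Each $Y_j$ translates into a simple cycle of $\Theta$ in the excerpt's sense (Hamiltonian edges appear at most once, while hyperedges may appear via several distinct pair-edges). Their costs telescope, $\sum_j c[Y_j]=c[M]=c[\Psi]\le 0$, so at least one $Y_{j^*}$ satisfies $c[Y_{j^*}]\le 0$. The girth assumption forces $Y_{j^*}$ to contain at least $g$ edges in total, and because every vertex of $\Theta$ belongs to a unique hyperedge, no two hyperedge-pair edges of $Y_{j^*}$ can share a vertex and hence cannot be consecutive in the walk. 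Therefore at least half of the edges of $Y_{j^*}$ must be Hamiltonian, yielding the required $g/2$ Hamiltonian edges.

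The main obstacle I anticipate is the verification that Veblen cycles in $M$ really correspond to admissible walks in $\Theta$ satisfying the no-U-turn and no-roundabout restrictions of the excerpt — in particular, that the chosen intra-hyperedge pair-matchings never create a forbidden roundabout, and that the resulting object qualifies as a simple cycle (or, in a degenerate sub-case, a simple path) in the paper's specific hypergraph sense. Some minor care is also needed for components of $\Psi$ in which every $d_t$ is even (so that no hyperedge-pair edges are added at all), or for components containing only zero-length atom paths; these are handled uniformly by Veblen's theorem once the main setup is in place.
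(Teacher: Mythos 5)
Your overall strategy --- fix up parities at the hyperedges using the evenness of $q$ and then invoke an Euler-type decomposition --- is in the right spirit, but the specific route through a Veblen cycle decomposition of the edge-multiset $M$ has a genuine gap. Because atom paths are intervals of the Hamiltonian line, any two overlapping atom paths of $\Psi$ (and any repeated atom path, as in Example~\ref{hyperpromenade example}) create parallel copies of the same Hamiltonian edge in $M$; this is the generic situation, not a corner case. A Veblen decomposition of such a multigraph is free to pair two parallel copies of a single Hamiltonian edge into a $2$-cycle $g_i\to g_{i+1}\to g_i$. That object is \emph{not} a cycle of $\Theta$ in the paper's sense (it is a U-turn), so the girth hypothesis gives no lower bound on its length, yet its cost can be negative (e.g., $2c[g_i,g_{i+1}]=-2$ on the BSC) and can absorb all of the non-positivity of $c[M]$. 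Your claim that ``each $Y_j$ translates into a simple cycle of $\Theta$'' and that ``the girth assumption forces $Y_{j^*}$ to contain at least $g$ edges'' therefore fails exactly where it is needed. The paper avoids this by keeping the atom paths intact: it forms the hyperpromenade graph $\Theta_\Psi$ whose edges are whole atom paths, contracts the hyperedges so every vertex has even degree $q|B_\sigma|$, extracts a single Eulerian closed walk, and lifts it to one long closed walk $\tilde H$ in $\Theta$ whose length is at least $g$ by the girth; repeated Hamiltonian edges in $\tilde H$ are harmless because they are only ever compared within short windows.

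Even setting the $2$-cycle issue aside, your conclusion delivers a simple cycle with \emph{at least} $g/2$ Hamiltonian edges, whereas the theorem asserts (and Theorems~\ref{BSC error bound}--\ref{MBIOS error bound} require, since their union bounds enumerate only paths with exactly $\tfrac12 g$ Hamiltonian edges) a simple path or cycle with exactly $g/2$ of them. The missing ingredient is the paper's sliding-window averaging step: writing the contracted walk as $H=(h_0,\dots,h_{|H|})$ and summing the costs of all windows $H_i$ of $g/2$ consecutive Hamiltonian steps, each edge is counted $g/2$ times, so some window has $c[H_{i^*}]\le 0$; each such window is forced to be a simple path or cycle because a repeated Hamiltonian edge inside a window of total length below $g$ would contradict the girth. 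You would need to append this argument to whichever non-positive-cost object your decomposition produces. Your parity computation (Hamiltonian degree of $g_i$ congruent to $|B_i|$ modulo $2$, corrected by a perfect matching inside each hyperedge with $d_t$ odd) is correct and is a legitimate alternative use of the evenness of $q$, but as it stands the proposal does not prove the theorem.
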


\begin{proof}
Let $\Psi$ be a hyperpromenade with $c[\Psi] \leq 0$. By the
discussion above, we may assume w.l.o.g. that $\Psi$ is connected.
First, we will show that there is a cycle
$\tilde{H}=(\tilde{h}_0,\tilde{h}_1,\dots,\tilde{h}_{|\tilde{H}|}=\tilde{h}_0)$
in $\Theta$ which has $c[\tilde{H}]=c[\Psi]$, where $c[\tilde{H}]$
is measured along the edges of $\tilde{H}$. Draw the hyperpromenade
graph $\Theta_{\Psi}$, and contract the matching
hyperedges\footnote{contracting the hyperedge unites all vertices it
connects into one vertex, retaining any other edges connected to the
original vertices.}. The result is a graph, with no hyperedges, where
vertex $\sigma$ has degree $q |B_{\sigma}|$. Since $q$ is even, all
degrees are even and we can find an Eulerian tour $C$ in $\Theta_{\Psi}$, i.e., a simple cycle which passes through all
the edges. Since every edge in $\Theta_{\Psi}$ has the same cost as
its corresponding atom path in $\Psi$, we have $c[C]=c[\Psi]$. By
adding back the matching hyperedges and tracing along the atom paths
making up $\Psi$, we get from $C$ the desired cycle $\tilde{H}$ in
$\Theta$ with $c[\tilde{H}]=c[\Psi]$. Now, contract the matching hyperedges in $\tilde{H}$.
Denote by $H=(h_0,h_1,\dots,h_{|H|}=h_0)$ the contracted version of $\tilde{H}$.

No two matching hyperedges share an endpoint, and by definition the
same hyperedge cannot be used twice in a row. Therefore, at most
every other edge of $\tilde{H}$ is a matching hyperedge. Thus, and since $\tilde{H}$ is a cycle, 
\begin{equation*}
    |H| \geq \frac{1}{2} |\tilde{H}| \geq \frac{1}{2} g
\end{equation*}
Write out the cost of $H$ explicitly as
\begin{equation*}
    c[H]=\sum_{i=0}^{|H|-1} c[h_i,h_{i+1}]
\end{equation*}
Let $H_i=(h_i,\dots,h_{i+\frac{1}{2}g})$ be a subsequence of $H$
containing $g/2$ edges, and let
\begin{equation*}
    c[H_i]=\sum_{j=i}^{i+\frac{1}{2}g-1} c[h_j,h_{j+1}]
\end{equation*}
$H_i$ must be a simple path (or a simple cycle), i.e., it can have
no repeated Hamiltonian edges; otherwise, by adding the matching hyperedges back
into $H$, this would imply the existence of a cycle in $\Theta$ of
length less than $g$. Note that
\begin{equation*}
    c[\tilde{H}]=c[H]=\left(\frac{1}{\frac{1}{2}g}\right)
    \sum_{i=0}^{|H|-1} c[H_i]
\end{equation*}
since every edge is counted exactly $\frac{1}{2}g$ times. Now, if
$c[H]=c[\Psi] \leq 0$, then there must be a simple path or cycle
$H_{i^*}$ such that $c[H_{i^*}]\leq 0$. Adding back the matching
hyperedges, we get the desired simple path or cycle $Y$.
\end{proof}
Theorem \ref{cycle theorem} asserts that if there are no
negative-cost simple paths or cycles, then there is no corresponding
negative-cost hyperpromenade. It is also the first point in our
derivation which requires to use the assumption that $q$ is even. We
will now use this to get an error bound for LP decoding under the BSC.
This bound extends \cite[Theorem 6.5]{FeldmanPhdThesis}.
\begin{theorem}\label{BSC error bound}
    Consider a regular RA($q$) code ($q$ even) with block length
    $n$, and $\pi_E$ constructed in the proof of Theorem~\ref{interleaver theorem} as an interleaver.
    Assume that the code is transmitted over the BSC.
Let $\epsilon>0$ be some positive number. If the transition
probability $p$ satisfies
$p<q^{-4\left(\epsilon+1+\frac{1}{2}\log_q (4q-2)
\right)}$, then when decoded using the RALP
decoder, the code has word error probability
\begin{equation}\label{BSC bound}
    \text{WEP}<K (\log_q n) \cdot n^{-\epsilon}
\end{equation}
where $K$ is a positive constant.
\end{theorem}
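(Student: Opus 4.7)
The plan is to apply a union bound over a family of witnesses for decoder failure. By Theorem \ref{Feldmans theorem} combined with Theorem \ref{cycle theorem}, the RALP decoder fails only if the auxiliary graph $\Theta$ contains a simple path or cycle $Y$ with exactly $g/2$ Hamiltonian edges satisfying $c[Y] \leq 0$. Letting $\mathcal{Y}$ denote the collection of all such candidate $Y$, my first step would be to bound
\begin{equation*}
\text{WEP} \leq \sum_{Y \in \mathcal{Y}} P\bigl(c[Y] \leq 0\bigr),
\end{equation*}
and then control the two factors $|\mathcal{Y}|$ and $\max_{Y} P(c[Y] \leq 0)$ separately.

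By linearity of the RA code and symmetry of the BSC I would assume the all-zero codeword is transmitted, so that under the rescaling of Section \ref{Preliminaries Section} each Hamiltonian edge cost is an independent $\pm 1$ variable taking the value $-1$ with probability $p$ (a flip). Because $Y$ is simple, its $g/2$ Hamiltonian edges correspond to distinct code bits, and $c[Y]$ is a sum of $g/2$ independent $\pm 1$ random variables; the event $c[Y] \leq 0$ is exactly ``at least $g/4$ flips in $g/2$ trials'', and a standard Chernoff-type bound on the upper tail of a binomial will give $P\bigl(c[Y] \leq 0\bigr) = O\bigl(g \cdot (4p)^{g/4}\bigr)$.

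Next I would count $|\mathcal{Y}|$ via a super-step argument in which each Hamiltonian edge of $Y$ is grouped together with the (optional) hyperedge immediately preceding it. Since each vertex lies in a unique hyperedge and the same edge cannot be used twice in a row, consecutive hyperedges in $Y$ are forbidden, so this grouping is well-defined and yields exactly $g/2$ super-steps. From a vertex entered via a Hamiltonian edge, a super-step either (i) skips the hyperedge and takes the single remaining Hamiltonian direction, or (ii) follows the hyperedge to one of the $q-1$ other vertices of its hyperedge and then takes one of the $2$ Hamiltonian directions available there; this yields a branching factor of $1 + 2(q-1) = 2q-1$ per super-step. Iterating $g/2$ times, summing over the $n$ starting vertices, and absorbing the slightly different first super-step together with the overcount by starting endpoint and direction into a constant, I would obtain $|\mathcal{Y}| = O\bigl(n (2q-1)^{g/2}\bigr)$.

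Combining the two bounds gives
\begin{equation*}
\text{WEP} = O\Bigl(g \cdot n \cdot \bigl((2q-1)^2 \cdot 4p\bigr)^{g/4}\Bigr) = O\Bigl(g \cdot n \cdot \bigl((4q-2)^2 p\bigr)^{g/4}\Bigr).
\end{equation*}
The hypothesis on $p$ is precisely $(4q-2)^2 p < q^{-4(\epsilon+1)}$, whence $\bigl((4q-2)^2 p\bigr)^{g/4} < q^{-(\epsilon+1)g}$; substituting $g = \log_q n - 1$ this becomes $O\bigl(n^{-(\epsilon+1)}\bigr)$, and multiplying by $n$ and by the $g = O(\log_q n)$ prefactor yields the claimed bound $K (\log_q n) n^{-\epsilon}$ after absorbing $q^{\epsilon+1}$ into $K$. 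The main obstacle will be getting the counting constant right: a naive bound that treats hyperedges and Hamiltonian edges as independent steps would inflate the branching and weaken the admissible range of $p$, so the delicate point is the asymmetry that the ``skip'' branch supplies only $1$ Hamiltonian direction while each of the $q-1$ ``use'' branches supplies $2$, which is exactly what produces the constant $(4q-2)^2$ in the hypothesis.
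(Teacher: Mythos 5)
Your proposal is correct and follows essentially the same route as the paper: a union bound over the at most $n(2q-1)^{g/2}$ simple paths or cycles with $g/2$ Hamiltonian edges (counted via the same $1+2(q-1)=2q-1$ branching argument), combined with the binomial tail bound $\Pr(c[Y]\leq 0)\leq \frac{g}{4}\binom{g/2}{g/4}p^{g/4}\leq \frac{g}{4}(4p)^{g/4}$, which is exactly your $O\bigl(g(4p)^{g/4}\bigr)$ estimate. Your algebra recovering the threshold $(4q-2)^2 p < q^{-4(\epsilon+1)}$ matches the paper's exponent bookkeeping, so no gap remains.
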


\begin{proof}
By theorems \ref{Feldmans theorem} and \ref{cycle theorem}, the
decoder will succeed if all simple paths or cycles in $\Theta$ with
$\frac{1}{2}g = \frac{1}{2}(\log_q n - 1)$
(this equality is attained by definition of $\pi_E$)
Hamiltonian edges have
positive cost. We claim there are at most $n (2q-1)^{\frac{1}{2}g}$
simple paths and cycles with $\frac{1}{2}g$ Hamiltonian edges. To
see this, build a simple path or cycle by choosing any vertex
$g_{i_0}$ and traversing a simple path beginning with a Hamiltonian
edge. There are at most two choices for the first edge. If, after
traversing the Hamiltonian edge, we arrive at a vertex $g_{i_1}$,
then from $g_{i_1}$ we can choose to proceed along the second
Hamiltonian edge connected to it, or traverse a hyperedge. If a
hyperedge is traversed, there are at most $2(q-1)$ possible choices
for the next Hamiltonian edge. This gives a total of $2q-1$ choices
for the second Hamiltonian edge. Proceeding in this manner, we see
that there are no more than $(2q-1)^{\frac{1}{2}g}$ simple paths or
cycles with $\frac{1}{2}g$ Hamiltonian edges beginning from the
vertex $g_{i_0}$. Choosing an arbitrary starting vertex gives a
total of no more than $n (2q-1)^{\frac{1}{2}g}$ possible simple
paths or cycles.
In the BSC, each
Hamiltonian edge has cost $-1$ or $1$. Therefore, in any simple path
or cycle $Y$, at least half of the edges must have cost $-1$ in
order to have $c[Y]\leq 0$. Consequently, we have

\begin{equation}\label{BSC substitution}
\Pr \left( c[Y] \leq 0 \right) =
\sum_{k=\frac{1}{4}g}^{\frac{1}{2}g}\binom{\frac{1}{2}g}{k}p^{k}(1-p)^{\frac{1}{2}g-k}
\leq \frac{1}{4}g\binom{\frac{1}{2}g}{\frac{1}{4}g}p^{\frac{1}{4}g}
\end{equation}
Applying the union bound over all possible choices of $Y$ (i.e., paths with
$\frac{1}{2}g=\frac{1}{2}(\log_q n - 1)$ Hamiltonian edges), we have
\begin{eqnarray*}
  \text{WEP} &\leq& n (2q-1)^{\frac{1}{2}g} \frac{1}{4}g\binom{\frac{1}{2}g}{\frac{1}{4}g}p^{\frac{1}{4}g} \\
   &\leq& \frac{1}{4} (\log_q n) \cdot n^{1+\frac{1}{2}\log_q (2q-1)}n^{\frac{1}{2} \log_q 2}n^{\frac{1}{4}\log_q p}\cdot p^{-\frac{1}{4}}\\
   &=& K (\log_q n) \cdot n^{-\epsilon}
\end{eqnarray*}
where $K=\frac{1}{4}p^{-\frac{1}{4}}$.
\end{proof}

A similar result for the AWGN channel, which extends \cite[Theorem 6.6]{FeldmanPhdThesis}
follows.
\begin{theorem}\label{AWGN error bound}
   Consider a regular RA($q$) code ($q$ even) with block length
   $n$, and $\pi_E$ constructed in the proof of Theorem~\ref{interleaver theorem} as an interleaver.
   Assume that the code is transmitted over the binary-input AWGN.
Let $\epsilon>0$ be some positive number. If the noise variance
satisfies $\frac{1}{\sigma^2} > 4 \ln q \left( 1 +
\epsilon + \frac{1}{2} \log_q (2q-1) \right)$, then the code
has, when decoded using the RALP decoder, word error probability
\begin{equation}\label{AWGN bound}
    \text{WEP}<\tilde{K} \sqrt{\frac{1}{\log_q n -1}} n^{-\epsilon}
\end{equation}
where $\tilde{K}$ is a positive constant.
\end{theorem}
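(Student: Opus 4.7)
The plan is to follow the template of the proof of Theorem~\ref{BSC error bound} verbatim through the combinatorial counting step, and then replace the binomial BSC tail estimate with a Gaussian tail estimate. Concretely, by Theorems~\ref{Feldmans theorem} and \ref{cycle theorem}, the RALP decoder succeeds whenever every simple path or cycle in $\Theta$ containing $\frac{1}{2}g = \frac{1}{2}(\log_q n - 1)$ Hamiltonian edges has strictly positive cost, and the same ``build a walk edge-by-edge'' argument as in Theorem~\ref{BSC error bound} shows that there are at most $n(2q-1)^{g/2}$ such paths/cycles. So I would apply a union bound over this family.

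The next step is to identify the distribution of $c[Y]$ for a simple path/cycle $Y$ with $\frac{1}{2}g$ Hamiltonian edges. Using the AWGN rescaling $\gamma_i = \tilde y_i$ and the definition $c[g_i,g_{i+1}] = \gamma_i(1-2x_i)$, together with $\tilde y_i = (1-2x_i) + z_i$, I get
\begin{equation*}
    c[g_i,g_{i+1}] = (1-2x_i)^2 + (1-2x_i)z_i = 1 + (1-2x_i) z_i,
\end{equation*}
and since $z_i \sim \mathcal{N}(0,\sigma^2)$ is symmetric, each edge cost is $\mathcal{N}(1,\sigma^2)$. Because $Y$ is simple, its $\frac{1}{2}g$ Hamiltonian edges are distinct, and memorylessness of the channel makes the corresponding noises independent; hyperedges contribute zero. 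Therefore
\begin{equation*}
    c[Y] \sim \mathcal{N}\!\left(\tfrac{g}{2},\ \tfrac{g}{2}\sigma^2\right),
\end{equation*}
so $\Pr(c[Y]\leq 0) = Q\!\bigl(\sqrt{g/2}/\sigma\bigr)$.

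I would then apply the standard tail bound $Q(x) \leq \frac{1}{x\sqrt{2\pi}} e^{-x^2/2}$, which yields
\begin{equation*}
    \Pr(c[Y]\leq 0) \leq \frac{\sigma}{\sqrt{\pi g}}\, e^{-g/(4\sigma^2)},
\end{equation*}
and combine this with the union bound to obtain
\begin{equation*}
    \text{WEP} \leq n(2q-1)^{g/2} \cdot \frac{\sigma}{\sqrt{\pi g}}\, e^{-g/(4\sigma^2)}.
\end{equation*}

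The final step is bookkeeping: substitute $g = \log_q n - 1$ and rewrite in base $n$. The factor $n(2q-1)^{g/2}$ becomes $(2q-1)^{-1/2}\,n^{1+\frac{1}{2}\log_q(2q-1)}$, and $e^{-g/(4\sigma^2)}$ becomes $e^{1/(4\sigma^2)}\,n^{-1/(4\sigma^2\ln q)}$. The overall exponent of $n$ is therefore
\begin{equation*}
    1 + \tfrac{1}{2}\log_q(2q-1) - \tfrac{1}{4\sigma^2\ln q},
\end{equation*}
which is bounded by $-\epsilon$ exactly when $1/\sigma^2 > 4\ln q\,\bigl(1+\epsilon+\tfrac{1}{2}\log_q(2q-1)\bigr)$, matching the hypothesis of the theorem. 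The residual $1/\sqrt{g}$ factor from the Gaussian tail bound produces the $\sqrt{1/(\log_q n - 1)}$ prefactor in \eqref{AWGN bound}, absorbing everything else into the constant $\tilde{K}$. I expect no real obstacle here; the only thing to watch is using the stronger form $Q(x)\le \frac{1}{x\sqrt{2\pi}}e^{-x^2/2}$ rather than $Q(x)\le \tfrac{1}{2}e^{-x^2/2}$, since the former is what delivers the claimed $(\log_q n-1)^{-1/2}$ improvement over a pure exponential bound.
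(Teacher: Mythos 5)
Your proposal matches the paper's proof essentially step for step: the same reduction via Theorems~\ref{Feldmans theorem} and \ref{cycle theorem}, the same count $n(2q-1)^{g/2}$, the identification $c[g_i,g_{i+1}]=1+\overline z_i$ with $\overline z_i\sim\mathcal N(0,\sigma^2)$ so that $c[Y]\sim\mathcal N(g/2,(g/2)\sigma^2)$, the same Gaussian tail bound \eqref{gaussian inequality}, and the same substitution $g=\log_q n-1$. The argument is correct; if anything you are slightly more explicit than the paper about the independence of the noises on distinct Hamiltonian edges and about the residual constants $(2q-1)^{-1/2}$ and $e^{1/(4\sigma^2)}$ being absorbed into $\tilde K$.
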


\begin{proof}
In the AWGN channel, each Hamiltonian edge in the auxiliary
graph has cost
\begin{equation*}
    c[g_i,g_{i+1}]= \gamma_i \left( 1-2x_i \right) = \gamma_i y_i = (y_i+z_i)y_i = 1 + \overline{z}_i
\end{equation*}
where $\overline{z}_i \sim \mathcal{N}\left(0, \sigma^2 \right)$. Therefore, if $Y$ is any simple
path or cycle with $\frac{1}{2}g=\frac{1}{2}(\log_q n -1)$ Hamiltonian edges, we have $c[Y]=\frac{g}{2} + Z$, where
\begin{equation*}
    Z \sim \mathcal{N}\left(0,\frac{g}{2}\sigma^2\right)
\end{equation*}

For a random variable $X \sim \mathcal{N}(0,s^2)$, we have the
following inequality: for all $x>0$,
\begin{equation}\label{gaussian inequality}
    \Pr \left(X \geq x\right) \leq \frac{s}{x\sqrt{2\pi}} e^{-\frac{x^2}{2s^2}}
\end{equation}
Using \eqref{gaussian inequality} we get
\begin{eqnarray}
  \Pr (c[Y] \leq 0) &=& \Pr (Z \geq \frac{g}{2}) \nonumber \\
   &\leq & \sqrt{\frac{\sigma^2}{\pi g}} e^{-\frac{g}{4\sigma^2}} \label{AWGN substitution}
\end{eqnarray}
As in Theorem \ref{BSC error bound}, using the union bound over all possible choices of $Y$
gives
\begin{eqnarray*}
  \text{WEP} &\leq& n (2q-1)^{\frac{1}{2}g} \cdot \Pr (c[Y] \leq 0) \\
   &\leq& n (2q-1)^{\frac{1}{2}g} \cdot \sqrt{\frac{\sigma^2}{\pi g}} e^{-\frac{g}{4\sigma^2}} \\
   &=& \sqrt{\frac{\sigma^2}{\pi (\log_q n -1)}} \cdot n^{1 + \frac{1}{2} \log_q (2q-1) -\frac{\log_q e}{4\sigma^2}} \\
   &<& \tilde{K} \sqrt{\frac{1}{\log_q n -1}} n^{-\epsilon}
\end{eqnarray*}
where $\tilde{K} \triangleq \sqrt{\frac{\sigma^2}{\pi}}$.
\end{proof}
A similar result for general MBIOS channels is given here in Theorem~\ref{MBIOS error bound}.
The proof follows the lines of the proofs of theorems \ref{BSC error bound} and \ref{AWGN error bound},
and is omitted.
\begin{theorem}\label{MBIOS error bound}
    Consider a regular RA($q$) code ($q$ even) with block length
    $n$, and $\pi_E$ constructed in the proof of Theorem~\ref{interleaver theorem} as an interleaver.
    When transmitted over an MBIOS channel and decoded using the RALP
    decoder, the code has word error probability WEP satisfying
\begin{equation}\label{MBIOS bound}
\text{WEP}< n (2q-1)^{\frac{1}{2}(\log_q n - 1)} \cdot \Pr \left(
\sum_{i=1}^{\frac{1}{2}(\log_q n - 1)} \tilde{z}_i \leq 0 \right)
\end{equation}
where $\tilde{z}_i$, $i=1,\dots,\frac{1}{2}(\log_q n - 1)$ are i.i.d. random variables
with cumulative distribution function
\begin{equation*}
    \Pr \left( \tilde{z}_i \leq z \right) = \Pr \left( \gamma_i \leq z | x_i = 0\right)
\end{equation*}

\end{theorem}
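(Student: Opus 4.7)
My plan is to follow the template of the proofs of Theorems~\ref{BSC error bound} and \ref{AWGN error bound}, isolating the two ingredients that generalize from the BSC/AWGN cases (the counting of simple paths/cycles, and the union bound) from the one ingredient that depends on the specific channel (the tail probability $\Pr(c[Y]\le 0)$ for a single path). By Theorems~\ref{Feldmans theorem} and \ref{cycle theorem}, together with the interleaver $\pi_E$ from Theorem~\ref{interleaver theorem} achieving girth $g=\log_q n -1$, the RALP decoder fails only if there exists a simple path or cycle $Y$ in $\Theta$ containing $\tfrac{g}{2}$ Hamiltonian edges with $c[Y]\le 0$. From the enumeration argument already carried out in the proof of Theorem~\ref{BSC error bound}, the number of such simple paths or cycles is at most $n(2q-1)^{\frac{1}{2}(\log_q n -1)}$, and this bound is channel-agnostic.

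Next I would invoke the standard codeword symmetry of the RALP decoder over MBIOS channels to assume without loss of generality that the all-zeros codeword is transmitted; equivalently $x_i=0$ for every $i$, so by \eqref{Definition of edge cost} the Hamiltonian edge cost becomes $c[g_i,g_{i+1}]=\gamma_i$. Because the channel is memoryless and the $\frac{g}{2}$ Hamiltonian edges of a simple path or cycle $Y$ use pairwise distinct trellis segments, the corresponding LLRs are independent. Defining $\tilde z_i$ as i.i.d.\ copies of the conditional distribution of $\gamma_i$ given $x_i=0$, we obtain for any such $Y$
\begin{equation*}
\Pr\!\left(c[Y]\le 0\right)=\Pr\!\left(\sum_{i=1}^{\frac{1}{2}(\log_q n -1)}\tilde z_i\le 0\right).
\end{equation*}

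Finally I would apply the union bound over the at most $n(2q-1)^{\frac{1}{2}(\log_q n -1)}$ candidate simple paths and cycles, which yields exactly \eqref{MBIOS bound}. The bounds in Theorems~\ref{BSC error bound} and \ref{AWGN error bound} can then be recovered as corollaries by substituting the BSC expression \eqref{BSC substitution} or the Gaussian tail \eqref{AWGN substitution} for the single-path probability above.

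The only substantive step beyond bookkeeping is the appeal to codeword symmetry that lets us fix the transmitted word to be all-zeros. For the RALP polytope this symmetry is standard (the feasible region is invariant under the relabeling induced by any codeword, and the channel is output-symmetric), so this is a remark rather than an obstacle; everything else is a direct channel-by-channel substitution into the structure established by Theorems~\ref{Feldmans theorem}, \ref{interleaver theorem}, and \ref{cycle theorem}.
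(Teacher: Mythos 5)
Your proposal is correct and matches the route the paper intends: the paper omits the proof of Theorem~\ref{MBIOS error bound}, stating only that it follows the lines of Theorems~\ref{BSC error bound} and \ref{AWGN error bound}, which is precisely the decomposition you carry out (channel-agnostic enumeration of the at most $n(2q-1)^{\frac{1}{2}(\log_q n -1)}$ simple paths/cycles via Theorems~\ref{Feldmans theorem}, \ref{interleaver theorem} and \ref{cycle theorem}, plus a union bound over the channel-dependent single-path tail probability). Your added remarks on the independence of the LLRs along distinct trellis segments and on the output-symmetry reduction to the all-zeros codeword are exactly the points implicit in the theorem's definition of the $\tilde z_i$.
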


We note that The BSC and AWGN error bounds in \eqref{BSC bound} and \eqref{AWGN bound} can be derived
as special cases of \eqref{MBIOS bound}.

\section{Discussion and Numerical Results}\label{Discussion and Numerical Results section}
In the last section, we have given explicit bounds on the decoding error
probability for the RALP decoder. While these bounds apply to
regular RA($q$) codes with even $q$, it is possible to extend the
error bounds to irregular RA codes, where all repetition degrees are
even. This is apparent if we note that the proofs of Theorems
\ref{Feldmans theorem} and \ref{cycle theorem}--\ref{MBIOS error
bound} do not make any assumption on the regularity of the code.
The fact that all repetition degrees must be even is required in the
proof of Theorem \ref{cycle theorem} (this, both for the regular and
irregular cases). However, we are unable to provide an extension of
Theorem \ref{interleaver theorem} to irregular codes. In other
words, the construction of an interleaver which yields an auxiliary
graph with girth $g=\log_q n -1$ does not easily extend to the
irregular case. Still, whatever girth may be achieved for a specific
auxiliary graph of an irregular RA code can be used to apply Theorem \ref{MBIOS error
bound} to any MBIOS channel. That is, if a girth $g'$ can be
achieved for an irregular graph (rather than $g=\log_q n -1$ when
$\pi_E$ is used as an interleaver), we would have that
\begin{equation}\label{Irregular MBIOS bound}
\text{WEP}< n (2q_{\text{max}}-1)^{\frac{1}{2}g'} \cdot \Pr \left(
\sum_{i=1}^{\frac{1}{2}g'} \tilde{z}_i \leq 0 \right)
\end{equation}
instead of \eqref{MBIOS bound}, where $q_{\text{max}}$ now denotes the \emph{maximum} repetition
degree; the quantity $n (2q_{\text{max}}-1)^{\frac{1}{2}g'}$ is a
revised bound on the number of possible simple paths and cycles with
$\frac{1}{2}g'$ Hamiltonian edges in the irregular graph. It
particularizes to the expression in the proof of Theorem \ref{BSC error bound} in the special case of a regular code.

\begin{table}
\begin{center}
\begin{tabular}{|c|c|}
\hline
$q$ & Threshold \\ \hline
$4$ & $2 \cdot 10^{-5}$ \\
$6$ & $1.6 \cdot 10^{-6}$ \\
$8$ & $2.7 \cdot 10^{-7}$ \\ \hline
\end{tabular}
\end{center}
\caption{BSC transition probability thresholds ensuring vanishing error probability, as derived from Theorem~\ref{BSC error bound}.}
\label{BSC numerical results}
\end{table}

In Table~\ref{BSC numerical results} we give the thresholds for the BSC transition probability for which the error bound
\eqref{BSC bound} decays to zero, for some choices of $q$. It is apparent
that the threshold worsens as $q$ increases. This is in contrast to our expectation that coding performance
should improve with the reduction of coding rate.
Obviously, one cause for this is that having a negative cost path or cycle with $\frac{1}{2}g$ Hamiltonian edges
is only a \emph{necessary} condition for decoding failure. It would be reasonable to conjecture that, as $q$
increases, many structural restrictions other than this condition must exist in order to have a negative-cost hyperpromenade.
Also, the reliance on a union bound over all possible simple paths and cycles undermines the tightness of the bound.
%
%
Furthermore, if we were to examine
an irregular RA code, it can be seen that the bound in \eqref{Irregular MBIOS bound} would
yield the same result for the irregular code as well as for a regular code with repetition
degree $q_{\text{max}}$. Thus the possible improvement (increase) in the coding rate obtained
by reducing the repetition degrees of some information symbols is not reflected in our bound.

We further note that the improvement over Feldman's work presented in \cite{HalabiEven2005} for
$q=2$ does not seem to extend to $q > 2$. This is due to the following. In \cite{HalabiEven2005},
an improvement for $q=2$ is obtained by a careful characterization of cycles
in the auxiliary graph; since for $q=2$ every cycle is a promenade, (thus finding a
cycle is a \emph{sufficient} condition for identifying a promenade) the study in \cite{HalabiEven2005}
successfully captures all error events necessary for an upper bound.
The distinction between the $q=2$ case and $q>2$ is that
in the latter case, not every cycle is a valid hyperpromenade.
Consequently, analysis of cycles alone is insufficient in order to obtain an upper bound on the
decoding error probability for $q>2$.

\section{Summary}\label{Summary section}
We have presented an upper bound on the word error probability of regular and irregular RA codes
transmitted over MBIOS channels and decoded by the RALP decoder. This bounding technique
extends the one presented by Feldman \cite{FeldmanPhdThesis} for regular RA($2$) codes to the case of
regular RA($q$) codes and to irregular RA codes with even repetition degrees.
Our technique essentially relies on applying Euler's graph-theoretic theorem to an appropriately-defined
graph (i.e., the hyperpromenade graph).

\appendices

\section{Proof of Theorem \ref{interleaver theorem}} \label{Proof of interleaver theorem}

\emph{Theorem \ref{interleaver theorem}}:
Let $n=qk$ be the block length of a regular RA($q$) code, $q\geq 3$ and $n\geq q^4$. Then one may construct for this code
an auxiliary graph which is a Hamiltonian line plus $k$ $q$-hyperedges which form a matching, so that the
auxiliary graph has girth no less than $g = \lfloor \log_q n \rfloor -1$.

In the proof we will assume $n$ is a power of $q$ to avoid using floor and ceiling notations. The
proof easily extends to the general case.

\begin{proof}
Let $H$ be a Hamiltonian cycle with $n=q^{g+1}$ vertices. Let $E_0$
be the set of edges in $H$, and $V$ the set of vertices (this is denoted by $H=(V,E_0)$).
Let $D$ denote the set of all possible $q$-hyperedges, and let $A \subseteq D$ satisfy the
following conditions
\begin{enumerate}
  \item No vertex is incident with more than one $q$-hyperedge in
  $A$. \label{Condition1}
  \item The girth of $H_A=(V,E_0 \bigcup A)$ is not less than
  $g$.\label{Condition2}
\end{enumerate}

Then we shall show that if $|A|<q^g$, there exists $A^+ \subseteq D$
such that $\left|A^+ \right| = |A|+1$ and $A^+$ satisfies
\ref{Condition1}) and \ref{Condition2}). By repeatedly applying this result
we obtain that there exists some set $A$ with $|A|=q^g$ satisfying the above two conditions.

Let $d_A$ be the distance function in $H_A$. Let $V_2(A) \subseteq
V$ denote the set of vertices with degree 2 in $H_A$, i.e., those
which are not incident with any $q$-hyperedge in $A$. Given that
$|A|<q^g$, it follows that $V_2(A)$ has at least $q$ members. If
some set of $q$ vertices $p_1,p_2,\dots,p_q \in V_2(A)$ is such that
$d_A(p_i,p_j)\geq g-1$ for all $i,j \in \{1,\dots,q\}$, $i\neq j$,
then the set $A^+=A\bigcup \left\{p_1 p_2 \dots p_q\right\}$
satisfies the required conditions.

Suppose there is no such set of $q$ vertices. Define $t$ to be the
maximum number of vertices $p_1,\dots,p_t \in V_2(A)$ such that
$d_A(p_i,p_j)\geq g-1$ for all $i,j \in \{1,\dots,t\}$, $i\neq j$.
By our assumption we have that $1 \leq t \leq q-1$. Select vertices
$p_1,\dots,p_t$ which achieve this maximum. Let
\begin{equation}\label{Dr(z)}
    D_r(z) =\left\{ v \in V | d_A(z,v) \leq r \right\}
\end{equation}

We claim that
\begin{equation}\label{v2A contained in U}
    V_2(A) \subseteq U' \triangleq D_{g-1}(p_1) \cup D_{g-1}(p_2) \cup \dots D_{g-1}(p_t)
\end{equation}
This is easily seen, as follows. Suppose there is a vertex $p_{t+1}
\in V_2(A) \backslash U'$. Then the set $p_1,\dots,p_t,p_{t+1}$ is
such that $d_A(p_i,p_j)\geq g-1$ for all $i,j \in \{1,\dots,t+1\}$;
this is in contradiction with the definition of $t$.

Set $p_1,\dots,p_t$ according to the definition above, and choose
$p_{t+1},\dots,p_q \in V_2(A)$ arbitrarily. For any $x \in V_2(A)$
the set $D_{g-1}(x)$ has size at most
\begin{equation}\label{neighborhood bound}
    1+2+2q+2q^2+\dots+2q^{g-2}=1+2\frac{q^{g-1}-1}{q-1}
\end{equation}
Consequently, if $U\triangleq D_{g-1}(p_1)\bigcup
D_{g-1}(p_2)\bigcup \dots \bigcup D_{g-1}(p_q)$, then
\begin{equation}\label{U inequality}
    \left|U\right| \leq
    \left|D_{g-1}(p_1)\right|+\left|D_{g-1}(p_2)\right|+\dots+\left|D_{g-1}(p_q)\right|
    \leq q+2q\frac{q^{g-1}-1}{q-1}
\end{equation}
Let $W=V\backslash U$. Since $|V|=q^{g+1}$, it follows from the
preceding inequality that
\begin{equation}\label{inequality of W}
    \left| W \right| \geq q^{g+1} - q -2q\frac{q^{g-1}-1}{q-1}
\end{equation}
Let $\tilde{p}_1,\tilde{p}_2,\dots,\tilde{p}_q \in W$ be arbitrary
vertices, and let $\tilde{U}\triangleq D_{g-1}(\tilde{p}_1)\bigcup
D_{g-1}(\tilde{p}_2)\bigcup \dots \bigcup D_{g-1}(\tilde{p}_q)$.
This situation is depicted in Figure~\ref{circles figure}.
\begin{figure}[h]
\begin{center}
\input{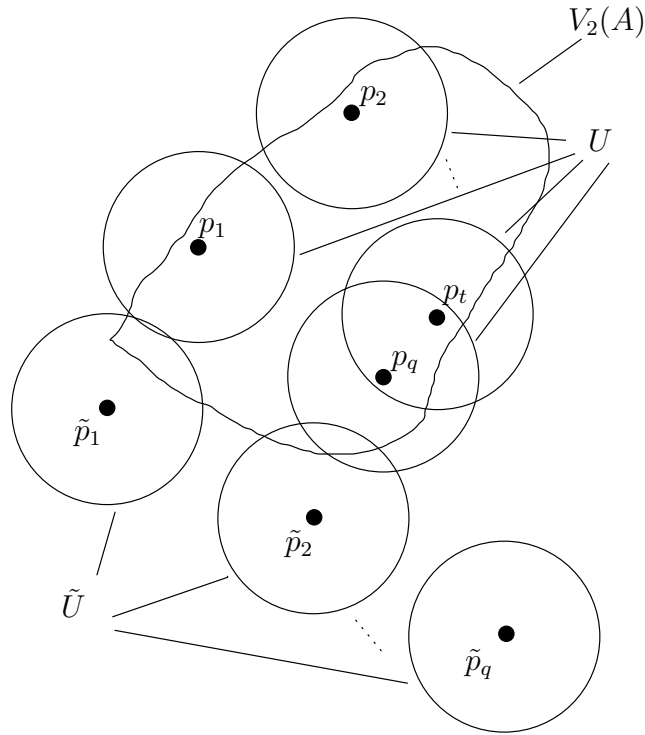}
\caption{A visualization of the some of the relationships between the sets defined in the proof.
The sphere around each noted vertex
depicts the set $D_{g-1}(\cdot)$ of vertices in its vicinity.}
\label{circles figure}
\end{center}
\end{figure}

We have that $D_{g-1}(\tilde{p}_1)$ has size at most
\begin{equation}\label{neighborhood bound, not in V2}
    1+(1+q)+(1+q)q+(1+q)q^2+\dots+(1+q)q^{g-2}=1+(1+q)\frac{q^{g-1}-1}{q-1}
\end{equation}
and therefore $|\tilde{U}|$ satisfies
\begin{equation}\label{tildeU inequality}
\left|\tilde{U}\right| \leq q+q(1+q)\frac{q^{g-1}-1}{q-1} \stackrel{(a)}{<} \left|W\right|
\end{equation}
where (a) stems from plugging in the expression from
\eqref{inequality of W} and applying some algebra \footnote{one
needs to assume here that $q\geq 3$ and $g \geq 3$; $g \geq 3$ follows from the assumption that $n \geq q^4$.}. Now,
\eqref{tildeU inequality} implies that there exist vertices
$s_1,s_2,\dots,s_q \in W$ such that any pair $i\neq j$ has
$d_A(s_i,s_j)\geq g-1$. To see this, note that one may select $s_1,s_2,\dots,s_q \in W$
sequentially, as follows: first select $s_1 \in W$ arbitrarily, then select for $i=2,\dots,q$
\begin{equation}\label{directions for selection of si}
s_i \in W \backslash \bigcup_{j=1}^i D_{g-1}(s_j)
\end{equation}
arbitrarily; \eqref{tildeU inequality} ensures that the set in the RHS of \eqref{directions for selection of si}
is nonempty. We further have by definition of $W$ that none of the vertices $s_1,s_2,\dots,s_q$ are in $V_2(A)$.
Therefore, for every $s_i, i=1,\dots,q$ there is a distinct
hyperedge $s_i s_i^{(2)}\dots s_i^{(q)}$. Consider these
$q$-hyperedges, $s_1 s_1^{(2)}\dots s_1^{(q)},s_2 s_2^{(2)}\dots
s_2^{(q)},\dots,s_q s_q^{(2)}\dots s_q^{(q)}$. Since all vertices in
$W$ have distance at least $g$ from $p_1,p_2,\dots,p_q$, it follows
that $s_j^{(i)}$, $2 \leq i \leq q$, $1 \leq j \leq q$ all have
distance at least $g-1$ from $p_1,p_2,\dots,p_q$. Therefore, the set
\begin{eqnarray*}
    A^+ &=& A \bigcup \left\{p_1 s_1^{(2)}\dots s_1^{(q)},p_2 s_2^{(2)}\dots s_2^{(q)},p_q s_q^{(2)}\dots s_q^{(q)},s_1 s_2 \dots,s_q \right\}
    \nonumber \\ && \hspace*{1cm} \backslash \left\{s_1 s_1^{(2)}\dots s_1^{(q)},s_2 s_2^{(2)}\dots s_2^{(q)},\dots,s_q s_q^{(2)}\dots s_q^{(q)} \right\}
\end{eqnarray*}
satisfies the required conditions.

Once we have built a circle of vertices and a $q$-fold matching between them,
the theorem follows by removing one of the edges along the circle; this is the nonexistent edge
in the graph $\Theta$ between the first and last nodes. Removing this edge does not reduce the
girth of the auxiliary graph, and completes the desired construction.
\end{proof}

\emph{Discussion.} The proof of the theorem incurs bounding the size
of the neighborhood sets $D_{g-1}(\cdot)$. We could have tightened
the bounds we used, e.g. in \eqref{neighborhood bound}, by noting
that if a matching edge is traversed, the next level neighbor can be
only one of two choices (a Hamiltonian edge must be used next). This
would have replaced equations \eqref{neighborhood bound} and
\eqref{neighborhood bound, not in V2} with more elaborate, albeit
precise expressions. Consequently, the girth bound would have
improved by a constant factor at most. This is of lesser importance
than the ultimate behavior of the girth of the graph which is
logarithmic in the block length. We thus omit this refinement.

Our proof uses a construction to show it is possible to build the desired graph.
We note that this construction contains degrees of freedom which can lead to different results.

\emph{The complexity of the proposed construction}. We will show
that the complexity of constructing the matching above is polynomial
in $n$, and in particular that it is no more than $O(n^{q+2})$ in
time and space. To show this, we go over the stages of the
construction and bound their complexity (some technical details are
omitted).
\begin{enumerate}
\item The basic iteration step in the construction involves adding a
hyperedge to the graph. This step is performed $k=n/q$ times. We
therefore examine the worst-case complexity of this basic step and multiply the
result by $k$.
\item In each iteration, construct for every vertex $x \in V$ the set
$D_{g-1}(x)$, in the form of a list. This entails a complexity of no
more than $n \cdot q^g=O(n^2)$, since $q^g$ is an upper bound on the size
of $D_{g-1}(x)$ (see Eq. \eqref{neighborhood bound, not in V2}).
\item Using the lists constructed in step (2), we need to determine if there exists a set of
vertices $p_1,p_2,\dots,p_q \in V_2(A)$ such that
$d_A(p_i,p_j)\geq g-1$, $i \neq j$. It can be seen that the
complexity of this step is no more than $O(n^{q+1})$. This bound
includes the complexity associated with finding the vertices
$p_1,\dots,p_t$ defined in the construction. If $t=q$, the iteration
step ends. If not, we need to construct the set $W$ and find the
vertices $s_1,s_2,\dots,s_q \in W$ such that any pair $i\neq j$ has
$d_A(s_i,s_j)\geq g-1$.
\item The construction of the set $W$ makes use of the precalculated
lists of neighbors, and can be seen to have complexity no more than
$O(n)$.
\item The final step requires finding vertices $s_1,s_2,\dots,s_q \in W$ such that any pair $i\neq j$ has
$d_A(s_i,s_j)\geq g-1$. Using the sequential construction described above,
the worst-case complexity can be seen to be $O(n^2)$. Once the vertices are found, the matching
hyperedges are added and the iteration step ends.
\end{enumerate}
The total complexity of constructing a high-girth interleaver is
thus no more than
\begin{equation*}
    \frac{n}{q} \left(O(n^2)+ O(n^{q+1}) + O(n) + O(n^2) \right) = O(n^{q+2})
\end{equation*}

\end{document}